\newtheorem{theorem}{Theorem}[section]
\newtheorem{lemma}[theorem]{Lemma}
\newtheorem{definition}[theorem]{Definition}
\newcommand\RS{{\mathrm{RS}}}
\newcommand\ORS{{\mathrm{ORS}}}
\newcommand\N{{\mathbb{N}}}
\newcommand\remove[1]{{}}
\newcommand{\ignore}[1]{}
\newcommand{\nobibentry}[1]{{\let\nocite\ignore\bibentry{#1}}}
\title{A note on Ordered Ruzsa-Szemer\'edi graphs}
\author{Kevin Pratt\thanks{Department of Computer Science, Courant Institute of Mathematical Sciences, New York University.}}
\begin{document}
	
	\maketitle
	
	\begin{abstract}
		A recent breakthrough of Behnezhad and Ghafari \cite{bg} and subsequent work of Assadi, Khanna, and Kiss \cite{assadi2025improved} gave algorithms for the fully dynamic $(1-\varepsilon)$-approximate maximum matching problem whose runtimes are determined by a purely combinatorial quantity: the maximum density of \emph{Ordered Ruzsa-Szemer\'edi} (ORS) graphs. We say a graph $G$ is an $(r,t)$-ORS graph if its edges can be partitioned into $t$ matchings $M_1,M_2, \ldots, M_t$ each of size $r$, such that for every $i$, $M_i$ is an induced matching in the subgraph $M_{i} \cup M_{i+1} \cup \cdots \cup M_t$. This is a relaxation of the extensively-studied notion of a Ruzsa-Szemer\'edi (RS) graph, the difference being that in an RS graph each $M_i$ must be an induced matching in $G$.
		
		In this note, we show that these two notions are roughly equivalent. Specifically, let $\ORS(n)$ be the largest $t$ such that there exists an $n$-vertex ORS-$(\Omega(n), t)$ graph, and define $\RS(n)$ analogously. We show that if $\ORS(n) \ge \Omega(n^c)$, then for any fixed $\delta > 0$, $\RS(n) \ge \Omega(n^{c(1-\delta)})$. This resolves a question of \cite{bg}. 
	\end{abstract}
	
	\section{Introduction}
	In the fully dynamic approximate matching problem, we are given an $n$-vertex graph $G$ which is undergoing edge insertions and deletions, and we seek to maintain a $(1-\varepsilon)$-approximate maximum matching after each update. Despite much work, the update complexity of this problem remains poorly understood. It is known that for $\varepsilon = 1/2$ one needs only constant update time \cite{solomon2016fully}, and for $\varepsilon > 1/3$ one can achieve $n^{1/2+o(1)}$ update time \cite{bernstein2016faster}. However, for any $\varepsilon \le 1/3$, it is only known that one can achieve $n^{1-o(1)}$ update time \cite{omv}.
	
	In \cite{bg}, the following notion was introduced in the context of this problem:
	\begin{definition}
	A graph $G$ is an $(r,t)$-ordered Ruzsa-Szemer\'edi $(\ORS)$ graph if its edge set can be partitioned into $t$ edge-disjoint matchings $M_1, M_2, \ldots, M_t$ each of size $r$, such that the subgraph of $G$ induced by $M_i$ does not include any edges from $M_j$ for any $j > i$.
	
	We let $\ORS(n,r)$ denote the maximum $t$ for which there exists an $n$-vertex $(r,t)$-$\ORS$ graph.
	\end{definition}
	The relevance of this quantity is due to an algorithm of \cite{bg} with update time \[\sqrt{n^{1+\varepsilon} \cdot \ORS(n, \Theta_\varepsilon(n)}.\]  Currently it is only known that for fixed $\varepsilon < 1/4$
	\[n^{o(1)} < \ORS(n, \varepsilon n) < n^{1-o(1)},\]
	with the upper bound due to \cite{bg} and the lower bound due to \cite{fischer2002monotonicity}. Importantly, note that if the lower bound is optimal, then \cite{bg} gives an $n^{1/2 + O(\varepsilon)}$-time algorithm, significantly improving on the current record for small $\varepsilon$. In follow-up work \cite{assadi2025improved}, this was improved to $n^{o(1)} \cdot \ORS(n, \Theta_\varepsilon(n))$ update time \cite{assadi2025improved}, thereby giving an algorithm whose complexity entirely rests on $\ORS(n, \Theta_\varepsilon(n))$.
	
	A closely related but much older (see \cite{ruzsa1978triple}) notion is that of a Ruzsa-Szemer\'edi graph: 
	\begin{definition}
	A graph $G$ is an $(r,t)$ Ruzsa-Szemer\'edi $(\RS)$ graph if its edge set can be partitioned into $t$ edge-disjoint matchings $M_1, M_2, \ldots, M_t$ each of size $r$, such that the subgraph of $G$ induced by $M_i$ is a matching.

	We let $\RS(n,r)$ denote the maximum $t$ for which there exists an $n$-vertex $(r,t)$-$\RS$ graph.
	\end{definition}
	%The bounds known for $\RS$ are roughly the same as those known for $\ORS$. 
	
	As for $\ORS$ graphs, it is only known that $n^{o(1)} \le \RS(n, \varepsilon n) \le n^{1-o(1)}$ for $\varepsilon < 1/4$ \cite{fischer2002monotonicity, fox2011new}.
	
	Clearly we have that $RS(n,r) \le ORS(n,r)$; could it be that $\ORS$ is significantly larger than $\RS$? We show that the answer is no:

	\begin{theorem}\label{main}
	If $\ORS(n, \varepsilon n) \ge \Omega(n^c)$, then for any fixed $\delta > 0$, $\RS(n, \Theta(\varepsilon^{1/\delta} n)) \ge \Omega(n^{c(1-\delta)})$.
	\end{theorem}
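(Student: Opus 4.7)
My approach exploits an asymmetry in the ORS definition: given $M_1,\dots,M_t$, the only obstruction to being an RS graph is that an edge $uv\in M_a$ may lie inside $V(M_b)$ for some $b>a$, since the reverse direction (an edge of $M_b$ inside $V(M_a)$) is already forbidden by the ORS property. Accordingly, define the \emph{cleanup} $M_a^{\mathrm c}\subseteq M_a$ by deleting every such problematic edge; then $\{M_a^{\mathrm c}\}_{a=1}^t$ is automatically an RS graph, and the problem reduces to showing that sufficiently many $M_a^{\mathrm c}$'s retain $\Omega(\varepsilon^{1/\delta}n)$ edges. Letting $\beta_{ab}:=|\{uv\in M_a:u,v\in V(M_b)\}|$ for $a<b$ and $T:=\sum_{a<b}\beta_{ab}$ (the total ``forward conflict''), the total number of edges removed from all $M_a$ is at most $T$.

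The key step is a dichotomy based on $T$. In the \emph{low-conflict} regime, say $T\le O(\varepsilon n^{c(1+\delta)+1})$, I subsample $S\subseteq[t]$ with each index kept independently with probability $p=n^{-c\delta}$ and restrict the cleanup to pairs in $S$; the restricted-cleanup still produces an RS graph on $S$. The expected number of bad edges across matchings in $S$ is at most $p^2 T$, and a Markov plus Chernoff calculation then shows that, with positive probability, a constant fraction of $S$ retains at least $\varepsilon^{1/\delta}n$ edges. Since $\mathbb{E}[|S|]=pt=\Omega(n^{c(1-\delta)})$, this directly yields the target RS graph. In the \emph{high-conflict} regime, pigeonhole locates some $b$ with $|E(G_{<b}[V(M_b)])|\ge T/t$, so the restrictions $\{M_a\cap E(G[V(M_b)])\}_{a<b}$ form a sub-ORS on $V(M_b)$ of size $2\varepsilon n$: the ORS property is easily seen to be inherited, and standard averaging extracts $\Omega(n^{c\delta})$ matchings of size $\Omega(\varepsilon n^{1-c(1-\delta)})$. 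I then apply the theorem recursively to this sub-ORS.

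The recursion terminates after at most $O(\log(1/\varepsilon))$ levels since the vertex count contracts by a factor $2\varepsilon<1$ at each step. The \textbf{main obstacle} is calibrating the threshold separating the two cases and verifying that, in the high-conflict case, the sub-ORS satisfies the theorem's hypotheses with parameters $(n'',\varepsilon'',c'')$ so that the inductive application yields $\Omega(n^{c(1-\delta)})$ matchings of size $\Omega(\varepsilon^{1/\delta}n)$ when read back into the original $n$-vertex graph (rather than only within the $2\varepsilon n$-vertex sub-graph). A naive single-step recursion in Case 2 produces only $\Omega(n^{c\delta(1-\delta)})$ matchings, which is good enough when $\delta>\tfrac12$ but short of the target for small $\delta$; handling that regime requires either an iterated dichotomy, peeling off conflicts one layer at a time $\Theta(1/\delta)$ times, or a sharper averaging argument inside the high-conflict branch that preserves the full exponent $c(1-\delta)$ on each descent.
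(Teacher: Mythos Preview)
Your cleanup observation is correct: removing from each $M_a$ every edge whose two endpoints lie in $V(M_b)$ for some $b>a$ does produce an RS decomposition, and your low-conflict branch (random thinning of the index set with probability $p=n^{-c\delta}$, then Markov on the surviving forward conflicts) is sound in outline.

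The gap is exactly where you flag it, and it is a real one, not a technicality. In the high-conflict branch you pass to $G_{<b}[V(M_b)]$, which lives on $n'=2\varepsilon n$ vertices. The averaging you describe yields $\Omega(n^{c\delta})$ matchings of size $\Omega(\varepsilon n^{1-c(1-\delta)})$, so the new density parameter is $\varepsilon'=\Theta(n^{-c(1-\delta)})$: it tends to zero with $n$. Hence the ``recursive application of the theorem'' is not an instance of the theorem at all (the hypothesis concerns a fixed $\varepsilon$), so the induction is not well-founded; and even granting it formally, the output matching size would be $\varepsilon'^{1/\delta}n'=\Theta(\varepsilon n^{1-c(1-\delta)/\delta})$, which is $o(1)$ once $c(1-\delta)>\delta$. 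Your own accounting already shows the exponent collapses to $c\delta(1-\delta)$ after one step. The proposed remedies (``iterated dichotomy $\Theta(1/\delta)$ times'', ``sharper averaging'') are not arguments; in particular, each descent further shrinks the density by a polynomial factor in $n$, so iterating does not stabilise the parameters. As written, the proof does not close for any $\delta<1/2$.

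For comparison, the paper's proof is a one-line tensor-power trick and does not attempt any cleanup or recursion on $G$. One takes the $k$-fold tensor power on vertex set $V(G)^k$, keeps only product edges whose coordinate matching-indices sum to a fixed value $s$, and observes that on this slice the ORS inequalities $f(U_i,V_i)\le a_i$ together with $\sum_i f(U_i,V_i)=s=\sum_i a_i$ force equality in every coordinate, so each product matching $M_a$ is genuinely induced. Pigeonhole over $s$ gives $\RS(n^k,r^k)\ge \ORS(n,r)^{k-1}/k$, and choosing $k=\lceil 1/\delta\rceil$ yields the claim. This sidesteps all density-tracking entirely.
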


	Thus, understanding the maximum density of ORS graphs with linear-sized matchings is roughly equivalent to understanding the density of RS graphs with linear-sized matchings. While previously it was conceivable that $\RS=n^{o(1)}$ while $\ORS=n^{1-o(1)}$, our result shows that this not possible. In the context of the dynamic matching problem, this means that the result of \cite{assadi2025improved} can be stated instead as an algorithm with update time of 
	\[
	n^{o(1)} \cdot RS(n,\Theta_\varepsilon(n)). 
	\]
	Looking at this differently, if the update time of the algorithm of \cite{assadi2025improved} turns out to be polynomial, then not only are current constructions of $\ORS$ graphs far from optimal, but so are current constructions of $\RS$ graphs.
	\section{A lower bound on $\RS$ via $\ORS$}
	\begin{lemma}\label{aux}
		For all $n,r,k$, $\RS(n^k,r^k) \ge \ORS(n,r)^{k-1}/k$.
	\end{lemma}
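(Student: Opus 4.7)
The plan is to tensor the densest $\ORS$ graph with itself $k$ times and extract an $\RS$ graph from a carefully chosen ``diagonal slice.'' Fix an $n$-vertex $(r,t)$-$\ORS$ graph $G$ with $t=\ORS(n,r)$ and matchings $M_1,\ldots,M_t$, and build the tensor product $G^{\otimes k}$ on $V(G)^k$: an unordered pair $\{(u_1,\ldots,u_k),(v_1,\ldots,v_k)\}$ is an edge iff $\{u_j,v_j\}\in E(G)$ for every $j$. For $\vec{i}\in[t]^k$, let $N_{\vec{i}}$ be the set of tensor-edges whose $j$-th coordinate projection lies in $M_{i_j}$. Each $N_{\vec{i}}$ is a matching of size at least $r^k$ (one obtains a distinct tensor-edge from each tuple of $M_{i_1}\times\cdots\times M_{i_k}$ by orienting via a fixed total order on $V(G)$), and the $N_{\vec{i}}$'s are pairwise edge-disjoint because the $M_i$'s are.

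For $s\in\{k,\ldots,kt\}$ let $S_s=\{\vec{i}:\sum_j i_j=s\}$ and $H_s=\bigcup_{\vec{i}\in S_s}N_{\vec{i}}$. Since the $S_s$ partition $[t]^k$ into at most $kt$ classes, pigeonhole gives some $s^\ast$ with $|S_{s^\ast}|\ge t^{k-1}/k$, and I claim that $H_{s^\ast}$ with matchings $\{N_{\vec{i}}\}_{\vec{i}\in S_{s^\ast}}$ is the required $\RS$ graph on $n^k$ vertices. To verify the induced-matching property of $N_{\vec{i}}$ inside $H_{s^\ast}$, take two distinct edges $\{u,v\},\{u',v'\}$ of $N_{\vec{i}}$ and a cross-edge, say $\{u,u'\}$ (the other three cases are symmetric). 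If $\{u,u'\}\in N_{\vec{i}''}\subseteq H_{s^\ast}$, then $\{u_j,u'_j\}\in M_{i''_j}$ for every $j$ while $u_j,u'_j\in V(M_{i_j})$; the $\ORS$ property of $M_{i_j}$ (the induced subgraph of $M_{i_j}\cup\cdots\cup M_t$ on $V(M_{i_j})$ is exactly $M_{i_j}$) then forces $i''_j\le i_j$ for every $j$. If $\vec{i}''=\vec{i}$ then $\{u,u'\}\in N_{\vec{i}}$, which together with $\{u,v\},\{u',v'\}\in N_{\vec{i}}$ forces $\{u,v\}=\{u',v'\}$ by the matching property; otherwise at least one coordinate has $i''_j<i_j$ strictly, so $\sum_j i''_j<s^\ast$, contradicting $\vec{i}''\in S_{s^\ast}$.

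\textbf{Main obstacle.} The delicate step is the induced-matching check above: per coordinate the $\ORS$ property only yields $i''_j\le i_j$ (equality is possible exactly when the cross-edge coincides with an $M_{i_j}$-edge), so the needed strict inequality $\sum_j i''_j<s^\ast$ must be pulled either from a per-coordinate strict drop or, in the degenerate case $\vec{i}''=\vec{i}$, from the matching property of $N_{\vec{i}}$ itself. The slicing-by-sum trick is what converts the one-sided $\ORS$ ordering into the two-sided $\RS$ condition: any hypothetical cross-edge is forced into a strictly smaller sum-slice and hence cannot appear in $H_{s^\ast}$.
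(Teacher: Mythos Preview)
Your proof is correct and follows essentially the same approach as the paper: both take the $k$-fold tensor power of an optimal $\ORS$ graph, partition its edges by the index vector $\vec{i}\in[t]^k$, restrict to a single slice $\sum_j i_j = s$ chosen by pigeonhole, and verify the induced-matching property by combining the coordinatewise inequalities $i''_j\le i_j$ from the $\ORS$ condition with the equal-sum constraint to force $\vec{i}''=\vec{i}$. The only cosmetic difference is that you phrase the size bound as $|N_{\vec{i}}|\ge r^k$ via a fixed orientation, whereas the paper asserts $|M_a|=r^k$ directly; either way the argument goes through.
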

	\begin{proof}
		Let $G = M_1 \cup \cdots \cup M_t$ be an $n$-vertex ORS-$(r,t)$ graph. For an edge $(u,v) \in E(G)$, we let $f(u,v)$ denote the index of the matching to which $(u,v)$ belongs.
		
		For $s \in \N$, consider the graph $H_s$ with $ V(H_s) = V(G)^k$ and
		\[E(H_s) = \{ ((u_1,u_2, \ldots, u_k), (v_1, v_2, \ldots, v_k)) \in V(H_s)^2 : \forall i \in [k]\hspace{0.1cm} (u_i, v_i) \in E(G),
		\sum_{i=1}^k f(u_i, v_i) = s\}.\]
		
		First, $H_s$ is a graph on $n^k$ vertices. For $a \in [t]^k$ with $\sum a_i = s$, let $M_a$ be the set of edges $(U,V) = ((u_1, \ldots, u_k), (v_1, \ldots v_k))$ in $H_s$ for which $f(u_i, v_i) = a_i$. Then the sets $M_a$ partition the edges of $H_s$, and $|M_a| = \prod_i |M_{a_i}| = r^k$. 
		
		We claim that the subgraph of $H_s$ induced by each $M_a$ is a matching. First note that $M_a$ is a matching, since if any vertex $U \in M_a$ had two distinct neighbors $V,W \in M_a$, there would exist some $i$ with $V_i \neq W_i$, and then $(U_i, V_i)$ and $(U_i, W_i)$ would be edges in $G$ both belonging to $M_{a_i}$, a contradiction. To see why it is an \emph{induced} matching, let $(U, V)$ be an edge in the induced subgraph. We know that for all $i$ it must be the case that $f(U_i, V_i) \le a_i$, since otherwise the subgraph of $G$ induced by $M_{a_i}$ would violate the ORS property. Also, $\sum f(U_i, V_i) = s = \sum a_i$. Together these imply that $f(U_i, V_i) = a_i$ for all $i$, and hence $(U,V) \in M_a$.
				
		The number of matchings in $H_s$ equals the number of solutions to $a_1 + \cdots + a_k = s$ with $a_i \in [t]$. By the pigeonhole principle, for some choice of $s$ this is at least $t^k/(kt) = t^{k-1}/k$.
	\end{proof}
	\begin{proof}[Proof of \Cref{main}]
		By \Cref{aux} and the assumption, \[\RS(n^k, (\varepsilon n)^k) \ge \ORS(n, \varepsilon n)^{k-1}/k \ge \Omega_k(n^{c(k-1)}).\]
		For all $N = n^k$, this shows that $\RS(N, \varepsilon^k N) \ge \Omega_k(N^{c(k-1)/k})$. By choosing $k = \lceil 1/\delta \rceil$ the statement holds for all sufficiently large such $N$. For $N$ that is not a $k$th power, we can apply the construction for $\lceil N^{1/k} \rceil^k$ and then delete an arbitrary subset of $\lceil N^{1/k} \rceil^k - N$ vertices, which will only shrink the size of a matching by $O_k(N^{(k-1)/k})$.
	\end{proof}
	We remark that the idea of \Cref{aux} is motivated by a trick used in the context of fast matrix multiplication which converts a relaxed notion of an induced matching in a hypergraph into a legitimate induced matching in its tensor powers; see for example \cite[Lemma 3.4]{blasiak2017cap}, \cite[Theorem 23]{christandl2022larger}.
	
	\section{Acknowledgments}
	I thank Sepehr Assadi for encouraging me to write this note, and for feedback on an earlier draft.
	\bibliographystyle{amsalpha}
	\bibliography{refs}
\end{document}